\documentclass[conference,letterpaper]{IEEEtran}
\IEEEoverridecommandlockouts

\newif\iffull
\fulltrue

\newcommand{\full}[1]{\iffull #1 \fi}
\newcommand{\short}[1]{\iffull \else #1 \fi}

\usepackage[utf8]{inputenc} 
\usepackage[T1]{fontenc}
\usepackage{url}
\usepackage{ifthen}
\usepackage{cite}
\usepackage[cmex10]{amsmath}
\usepackage{amsthm}
\usepackage{amssymb}
\usepackage{bbm}
\usepackage{bm}
\usepackage{comment}
\usepackage{dirtytalk}
\usepackage{graphicx}
\usepackage{balance}
\usepackage{xcolor}
\usepackage[ruled, linesnumbered, vlined]{algorithm2e}
\usepackage{soul}
\usepackage[inline]{enumitem}
\usepackage[capitalize]{cleveref}

\newtheorem{corollary}{Corollary}

\newtheorem{theorem}{Theorem}
\newtheorem{remark}{Remark}

\newcommand{\bfa}{{\boldsymbol a}}
\newcommand{\bfb}{{\boldsymbol b}}

\newcommand{\bfd}{{\boldsymbol d}}
\newcommand{\bfe}{{\boldsymbol e}}

\newcommand{\bfp}{{\boldsymbol p}}
\newcommand{\bfq}{{\boldsymbol q}}
\newcommand{\bfr}{{\boldsymbol r}}

\newcommand{\bfu}{{\boldsymbol u}}
\newcommand{\bfv}{{\boldsymbol v}}

\newcommand{\bfA}{{\mathbf A}}

\newcommand{\bfD}{{\mathbf D}}

\newcommand{\bfP}{{\mathbf P}}

\newcommand{\cB}{\mathcal{B}}

\newcommand{\cD}{\mathcal{D}}

\newcommand{\cQ}{\mathcal{Q}}
\newcommand{\cR}{\mathcal{R}}
\newcommand{\cS}{\mathcal{S}}

\newcommand{\cU}{\mathcal{U}}

\DeclareMathOperator{\baseToBits}{\mathsf{BaseToBits}}       
\DeclareMathOperator{\baseToUnit}{\mathsf{BaseToUnit}}
\DeclareMathOperator{\binaryEntropy}{h_2}                    
\DeclareMathOperator{\Bin}{Bin}                          
\DeclareMathOperator{\bitsToBase}{\mathsf{BitsToBase}}       

\DeclareMathOperator{\Copy}{\mathsf{Copy}}
\DeclareMathOperator{\dH}{d_H}                               
\DeclareMathOperator{\entropy}{H}                            
\DeclareMathOperator{\mutualInfo}{I}                         
\DeclareMathOperator{\PCRandSequence}{\mathsf{PCRandSequence}}
\DeclareMathOperator{\Reconstruct}{\mathsf{Reconstruct}}
\DeclareMathOperator{\unitToBase}{\mathsf{UnitToBase}}

\newcommand{\algoName}[1]{\ensuremath{\mathsf{#1}}}
\newcommand{\bigo}[1]{\mathcal{O}\left(#1\right)}                                   

\newcommand{\getrand}{\xleftarrow{\smash{\raisebox{-0.4ex}{\tiny$\mathsf{\$}$}}}}   
\newcommand{\nth}{\ifmmode ^\mathrm{th}    \else \textsuperscript{th}\xspace    \fi}
\newcommand{\set}[1]{\left[{#1}\right]}                                             
\newcommand{\size}[1]{\left \lvert {#1} \right \rvert}                              
\newcommand{\transpose}{\top}

\newcommand{\ans}{\bfA}                                         
\newcommand{\barcode}[1]{\ensuremath{\bfb}_{#1}}                 
\newcommand{\barcodeOfInterest}{\barcode{\f}}                   
\newcommand{\barcodePool}{\cB}

\newcommand{\baseSet}{\cS}           
\newcommand{\choiceRV}{\mathsf{C}}

\newcommand{\dbs}{\bfD}                                          
\newcommand{\dbsPool}{\cD}                                          
\newcommand{\dbsRV}{\mathsf{D}}
\newcommand{\hbarcode}[1]{\ensuremath{\widehat{\bfb}}_{#1}}      
\newcommand{\hrec}[1]{\ensuremath{\widehat{\bfd}}_{#1}}          
\newcommand{\f}{f}                                             
\newcommand{\indexRV}{\mathsf{F}}                                        
\newcommand{\kterm}{c_k}
\renewcommand{\l}{\ensuremath{\ell}}                            
\newcommand{\leakage}{\ensuremath{\varepsilon}}
\newcommand{\lbc}{\ensuremath{\ell}_{\mathrm{bc}}}              
\newcommand{\lpl}{\ensuremath{\ell}_{\mathrm{pl}}}              
\newcommand{\lu}{\ensuremath{\ell}_{\mathrm{u}}}              
\newcommand{\m}{m}                                              
\newcommand{\n}{n}                                              
\newcommand{\p}{p}
\newcommand{\payload}[1]{\ensuremath{\bfp}_{#1}}                 
\newcommand{\pf}{\rho}                                          
\newcommand{\PIRwithXOR}{\algoName{PIRwithXOR}}
\newcommand{\q}{q}
\newcommand{\queriedRecords}{\dbs'}
\newcommand{\query}{\bfq}                                       
\newcommand{\queryPool}{\cQ}
\newcommand{\queryRV}{\mathsf{Q}}                                        
\renewcommand{\r}{r}
\newcommand{\randomPool}{\cR}                                   
\newcommand{\randomPoolCopy}{\randomPool'}                      
\newcommand{\randomPoolRV}{\mathsf{R}}
\newcommand{\randomPoolStar}{\randomPool_\f}                     
\newcommand{\rec}[1]{\ensuremath{\bfd}_{#1}}                     

\newcommand{\un}{u}
\newcommand{\universalSet}{\cU}
\newcommand{\universalPrimer}{\bfu}

\definecolor{bleudefrance}{rgb}{0.19, 0.55, 0.91}

\begin{document}
\title{Private Information Retrieval for Large-Scale DNA-Based Data Storage\vspace{-0.8ex}} 

\author{%
  \IEEEauthorblockN{Gökberk Erdo\u{g}an\IEEEauthorrefmark{1}\IEEEauthorrefmark{2},
                    Daniella Bar-Lev\IEEEauthorrefmark{1}\IEEEauthorrefmark{3},
                    Rawad Bitar\IEEEauthorrefmark{2},
                    Antonia Wachter-Zeh\IEEEauthorrefmark{2},
                    Zohar Yakhini\IEEEauthorrefmark{4}\IEEEauthorrefmark{5}}
\thanks{\IEEEauthorrefmark{1}equal contribution;
    \IEEEauthorrefmark{2}School of Computation, Information and Technology, Technical University of Munich\short{, Germany};
  \IEEEauthorrefmark{3}Department of Mathematics, Universität Zürich\short{, Switzerland}; \IEEEauthorrefmark{4}Faculty of Computer Science,
                    Technion\short{---Israel Institute of Technology, Israel}; and \IEEEauthorrefmark{5}School of Computer Science,
                    Reichman University\short{, Israel}.}
  \thanks{Emails: \{gokberk.erdogan, rawad.bitar, antonia.wachter-zeh\}@tum.de, daniella.bar-lev@math.uzh.ch, zohary@cs.technion.ac.il}
  \thanks{This work was initiated during the Dagstuhl Seminar 24511 on ``Coding Theory and Algorithms for Emerging Technologies in Synthetic Biology". This work was funded by the European Union (DiDAX, 101115134). Views and opinions expressed are however those of the author(s) only and do not necessarily reflect those of the European Union or the European Research Council Executive Agency. Neither the European Union nor the granting authority can be held responsible for them. The work of D. Bar-Lev was also supported by Schmidt Sciences and by the Swiss National Science Foundation under grant number 212865.}
  \thanks{Fig.~\ref{fig:dataset_setup}-Fig.\short{~\ref{fig:PIR_response}} were created in BioRender. Bar-Lev, D. (2026).}
\vspace{-8ex}}

\maketitle


\begin{abstract}
    \short{THIS PAPER IS ELIGIBLE FOR THE STUDENT PAPER AWARD. }We investigate Private Information Retrieval (PIR) in the context of synthetic DNA-based data storage. While PIR is a well-studied primitive for digital databases, extending it to DNA-based databases presents unique challenges arising from biochemical query mechanisms and their complexity. We propose two approaches for adapting two-server PIR protocols to DNA-based storage, balancing privacy, efficiency, and feasibility. These approaches illustrate how information-theoretic privacy trade-offs manifest in DNA-based storage systems.
\end{abstract}
\medskip
\short{\vspace{-1ex} \vspace{-1ex}}

\medskip
\section{Introduction and Motivation}
Private Information Retrieval (PIR) allows a user to retrieve a specific record from a database without revealing which record is accessed~\cite{chor_private_1995}. In the classical setting, PIR protocols assume digital storage with efficient random access and in-memory algebraic computation. Multi-server PIR schemes achieve information-theoretic privacy by distributing queries across non-colluding servers and combining their responses using simple algebraic operations such as XOR. These protocols have been extensively studied and optimized for electronic storage systems, e.g.,~\cite{sun_capacity_2017, banawan_capacity_2018, sun_robust_2018, ulukus_private_2022, hollanti_2017_private, kadhe_2017_private, tajeddine_2016_private, raviv_2018_private}.

In parallel, synthetic DNA-based data storage has emerged as a promising medium for ultra-dense and long-term archival storage, see e.g.~\cite{church2012next, goldman2013towards, grass2015robust, bornholt2016dna, yazdi2017portable, erlich2017dna, organick2018random, bar2025scalable, anavy2019data, preuss2024efficient}. In such systems, digital data is encoded into large pools of synthetic DNA molecules, with millions or billions of distinct strands collectively representing the database. DNA-based storage offers storage densities and durability far beyond conventional electronic media~\cite{erlich2017dna, grass2015robust, anavy2019data}, but its access model differs fundamentally from that of digital memory~\cite{tabatabaei2015rewritable, organick2018random}. This mismatch motivates a careful re-examination of information retrieval in the context of DNA-based storage. In particular, we focus on PIR from DNA-based data, which, to the best of our knowledge, has only been studied in the independent and concurrent work of Wang et al.~\cite{wang2026sequential}
~for the single-server setting.

We consider a DNA-based storage system with a database replicated over multiple servers, and a user privately retrieving a certain record. Each server stores the database as a pool of synthetic DNA strands and can answer queries by performing biochemical operations, followed by sequencing and digital post-processing. The user may also perform biochemical operations, most notably, simple DNA synthesis. A key parameter of the model is the user's biochemical capability, which varies across the scenarios we consider and determines the privacy guarantees and the achievable efficiency, as we shall explain.

In classical two-server PIR~\cite{chor_private_1995}, a database $\dbs$ containing $\n$ files is stored across the servers. To retrieve file $\bfd_\f$, the user draws a random sequence ${\bfu \in \mathbb{F}_2^\n}$ and sends the query $\bfq_1 = \bfu$ to server $1$ and ${\bfq_2 = \bfu \oplus \bfe_\f}$ to server $2$, where $\bfe_\f$ is the $\f\nth$ standard basis vector and $\oplus$ is addition modulo 2, i.e., the XOR operation. The servers respectively respond with the dot product $\bfa_i = \langle \dbs, \bfq_i \rangle$, and the user computes $\bfd_\f = \bfa_1 \oplus \bfa_2$.

A direct implementation of classical PIR in DNA storage has a prohibitive cost. Each server would need to synthesize a \emph{primer} {(short DNA strand)} for every selected record per query, that is $\mathcal{O}(\n)$ individual synthesis operations. Moreover, unlike electronic storage, where servers can compute algebraic combinations of records in memory, DNA-based storage does not support in-pool computation: any algebraic processing, such as XOR or linear combinations of records, can only be performed after sequencing. PIR schemes must therefore be redesigned so that privacy does not rely on pre-sequencing algebraic manipulation. At the same time, synthesizing primers with \emph{random} sequences is inexpensive~\cite{anavy2019data, luescher2024chemical}, and \emph{polymerase chain reaction} (PCR) amplification time is, in principle, independent of the pool size~\cite{organick2018random}. These properties are the key features of DNA-based storage that our protocols exploit: queries are constructed primarily from randomly synthesized primers, with targeted synthesis limited to a single sequence.

%
%
We investigate how multi-server PIR protocols can be adapted to DNA-based storage under these constraints. We focus on a single core approach: exploiting inexpensive random primer pool synthesis. We introduce two novel PIR protocols for DNA-based storage and consider three scenarios of increasing generality that differ in the biochemical capabilities assumed for the user, the privacy guarantees achieved, and the number of servers used. These protocols and scenarios illustrate the fundamental trade-offs involved in PIR from large-scale DNA-based storage systems.
We assume a noiseless setting unless stated otherwise, deferring the important analysis of DNA-specific error sources such as strand dropouts, amplification bias, and sequencing noise to future work.


\section{Preliminaries}
\subsection{Notation}
Sequences and vectors are denoted by bold lowercase letters and matrices by bold uppercase letters. Sets are denoted by uppercase calligraphic letters. For a sequence $\bfv$, $\bfv[i]$ denotes its $i\nth$ element. For a positive integer $k$, $\set{k}$ denotes the set $\{1, \dots, k\}$. The Hamming distance between $\bfu$ and $\bfv$ of the same length is $\dH(\bfu, \bfv) = |\{i : \bfu[i] \neq \bfv[i]\}|$. We denote 
the binomial distribution with number of trials $\nu$ and success probability $\p$ as $\Bin(\nu, \p)$. 
~The operator $\getrand$ denotes sampling from a set uniformly at random, or sampling from a probability distribution. The entropy of a random variable $\mathsf{X}$ is denoted by $\entropy(\mathsf{X})$ and $\binaryEntropy(p)$ denotes the binary entropy. The mutual information between two random variables $\mathsf{X}$ and $\mathsf{Y}$ is denoted by $\mutualInfo(\mathsf{X};\mathsf{Y})$. 
We consider the DNA alphabet $\baseSet = \{A, C, G, T\}$. A length-$\l$ \emph{DNA sequence} is an element of $\baseSet^{1 \times \l}$. All logarithms are base $2$.

\textbf{Copying pools of sequences.}
Given a pool of sequences $\randomPool$, $\Copy(\cdot)$ generates a copy of the pool $\randomPool$ as $\randomPoolCopy = \Copy(\randomPool)$. Depending on the setting, the copy is either \textit{perfect} or imperfect\footnote{Here, we assume that $\Copy(\cdot)$ is a stochastic set operation as a proxy of the noisy chemical process. However, we do not address the specific effect of PCR bias on concentrations in the copy operation.}. The copy is \textit{imperfect}, if $ \forall \bfr \in \randomPool$, the following hold: 
\[
\Pr[\bfr \in \randomPoolCopy \mid \bfr \in \randomPool] = 1-\delta \quad \text{and} \quad \Pr[\bfr \in \randomPoolCopy \mid \bfr \notin \randomPool] = 0, 
\]
where $0<\delta\leq1$. The copy is \emph{perfect} if $\delta=0$. 

\textbf{Base-to-unit encoding.} We define a map $\baseToUnit(\cdot): \baseSet \mapsto \{0,1\}^{4 \times 1}$ as $A \rightarrow 1000$, $C \rightarrow 0100$, $G \rightarrow 0010$, $T \rightarrow 0001$.
The reverse mapping $\unitToBase(\cdot): \{0,1\}^{4 \times 1} \rightarrow \baseSet$ maps the length-4 unit vectors to the corresponding bases.

\textbf{Base-to-bit encoding.} We define a map $\baseToBits(\cdot): \baseSet \mapsto \{0,1\}^{2 \times 1}$ as $A \rightarrow 00$, $C \rightarrow 01$, $G \rightarrow 10$, $T \rightarrow 11$.
The reverse mapping $\bitsToBase(\cdot): \{0,1\}^{2 \times 1} \rightarrow \baseSet$ maps the 2-bit binary sequences back to their corresponding bases. 

The mappings $\baseToUnit(\cdot)$ and $\baseToBits(\cdot)$ work base-by-base on sequences of bases.

\subsection{Private Information Retrieval}
Private Information Retrieval (PIR)~\cite{chor_private_1995} is a primitive that allows a user to retrieve a record $\rec{\f}$ from a database $\dbs = (\rec{1}^\transpose, \dots, \rec{\n}^\transpose)^\transpose$ stored at $\m$ servers, while keeping the index $\f$ private. 
The user generates a query $\query_j$ for each server $ j$ and sends it to that server. Each server processes its query on $\dbs$ and returns an answer $\bfa_j$. The queries are constructed such that any set of $z<m$ servers learns nothing about $\f$ and that the user can reconstruct $\rec{\f}$ from the collection of answers.


\textbf{Correctness.} Let $\dbsRV_{\f}$ be the random variable for record $\rec{\f}$. A PIR protocol is said to be \emph{correct} if \short{\vspace{-0.5ex}}
\[
\entropy(\dbsRV_{\f} \mid \bfa_1, \bfa_2, \dots, \bfa_m, \query_1, \query_2, \dots, \query_m) = 0. 
\vspace{-0.5ex}
\]

\textbf{Privacy.} Let $\indexRV$ be the random variable for the queried file index $\f$, $\queryRV_i$ be the random variable for a query $\query_i$ and $z<m$ be a privacy parameter. A PIR protocol achieves \emph{perfect information-theoretic (IT)} privacy if $\mutualInfo(\queryRV_{i_1},\dots,\queryRV_{i_z};\indexRV) = 0$ for any $z$ queries. 
A PIR protocol achieves \emph{weak IT} privacy if $\mutualInfo(\queryRV_{i_1},\dots,\queryRV_{i_z};\indexRV) \leq \leakage$ for a desired $\leakage>0$. The term $\leakage$ is defined as the \textit{information leakage}, also analyzed in \cite{lin_weakly-private_2019, qian_improved_2022, bitar2024sparsity}.

\textbf{A basic two-server IT-PIR scheme.} 
Consider the two-server PIR scheme mentioned in the introduction, which forms the basis of our constructions in Section~\ref{sec:adaptingbasicpir}. Here, $m=2$ and $z=1$. IT-privacy follows from the fact that both $\bfq_1$ and $\bfq_2$ are uniformly distributed over $\{0,1\}^{\n \times 1}$ independently of $\f$. Thus, neither query reveals any information about $\f$.



\subsection{DNA-Based PIR Model}
\label{sec:dna-model}
We consider a multi-server PIR with replicated data for synthetic DNA-based data storage. Under this setting, each server stores the database $\dbs$ as a pool of synthetic DNA strands and answers queries using biochemical operations followed by sequencing and digital post-processing. A key parameter of the model is the user's biochemical capability, which determines how queries are generated. We progressively introduce the different scenarios with varying user biochemical capabilities in Section~\ref{sec:adaptingbasicpir}, after presenting our protocols.

\textbf{Strand and database structure.}
In DNA-based storage, digital data is encoded into synthetic DNA molecules called \emph{strands}. A strand is a short sequence over $\baseSet$, typically hundreds to thousands of symbols
long. Large collections of such strands are synthesized and stored together in solution, forming a \emph{pool}. Unlike electronic memory, strands in a pool are unordered and cannot be addressed by position; instead, each strand carries its own address as part of its sequence. Formally, each record $\rec{i}$ is stored as a strand of fixed length $\l = \lbc + \lpl + \lu$ nucleotides, partitioned into three parts:
\short{\vspace{-1ex}}
\[
   \rec{i} = [\barcode{i},\bfp_i,\universalPrimer], \qquad
    \barcode{i} \in \baseSet^{\lbc}, \quad \bfp_i \in \baseSet^{\lpl}, \quad \universalPrimer \in \baseSet^{\lu}, \vspace{-1ex}
\]
where $\barcode{i}$ is the \emph{barcode} (address), $\bfp_i$ is the \emph{payload} (content), and $\universalPrimer$ is a short universal sequence (serving as a template for a universal reverse primer) common to all strands in the pool. The database is then the pool $\dbsPool = \{\rec{1}, \rec{2}, \dots, \rec{\n}\}.$
Barcodes serve as unique identifiers: $\barcode{i} \neq \barcode{j}$ for all $i \neq j$. We write $\cB = \{\barcode{1}, \dots, \barcode{\n}\}$ for the set of barcodes in $\dbsPool$. We write $\universalSet = \baseSet^{\lbc}$ for the universe of all sequences of length $\lbc$, so $\un\triangleq\size{\universalSet} = 4^{\lbc}$ and $\cB \subseteq \universalSet$. See Fig.~\ref{fig:dataset_setup} for a schematic representation of the database setup.
\short{\vspace{-0.5ex}}
\begin{figure}[h]
    \centering
    \includegraphics[width=0.7\linewidth]{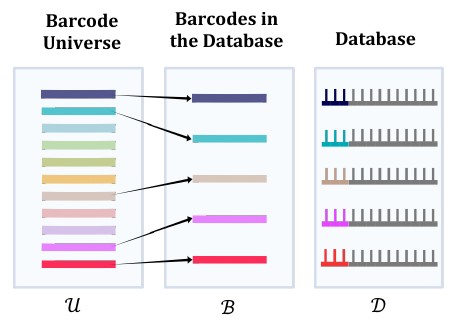}\short{\vspace{-2ex}}
    \caption{A schematic representation of the database setup. From left to right: the universe $\universalSet$, the set of barcodes in the database $\cB\subseteq\universalSet$, and the corresponding database $\dbsPool$, where the barcodes act as indices of the stored strands. The universal primer $\universalPrimer$ is not depicted as it is the same for all records.}
    \short{\vspace{-1ex}}\label{fig:dataset_setup}
\end{figure}

\short{\newpage}
\textbf{Selective amplification via PCR.}
Random access to individual strands within a DNA pool is achieved via PCR. In our model, each strand carries a unique barcode at one end and a short universal sequence $\universalPrimer$ at the other. To retrieve the strand with barcode $\barcode{i}$, a primer matching $\barcode{i}$ is introduced into the pool alongside the universal reverse primer. Enzymatic amplification then selectively replicates only the target strand. Importantly, PCR can amplify multiple distinct targets simultaneously by introducing several primers at once, and the amplification time does not scale with the pool size $\n$. This 
is a key feature of DNA-based storage that our protocols exploit.

\textbf{Barcode separation requirement.}
PCR primers bind to their target via hybridization, which tolerates a small number of sequence mismatches. To prevent a primer designed for barcode $\barcode{i}$ from inadvertently co-amplifying a strand with a similar barcode $\barcode{j}$, the barcodes must be sufficiently separated in Hamming distance\footnote{The tolerance of PCR to primer mismatches is more accurately captured by edit (Levenshtein) distance, which accounts for insertions and deletions in addition to substitutions. We use Hamming distance here as a mathematically convenient approximation.}. We refer to such inadvertent co-amplification as PCR primer cross-talk. We require that for all $i \neq j$,  $\dH(\barcode{i}, \barcode{j}) \geq 2\pf + 1$,
where $\pf \geq 1$ is the \emph{proximity tolerance}, that is, the maximum number of mismatches under which PCR may still inadvertently amplify a non-target strand. In coding-theoretic terms, the barcode set $\cB$ must be a code over $\baseSet^{\lbc}$ with minimum distance $2\pf + 1$.
\full{
    To prevent primer cross-talk, the Hamming spheres of radius $\pf$ around each barcode must be disjoint. By the Hamming (sphere-packing) bound, this restricts the maximum number of usable barcodes to:
    \[
        \n =|\cB| \leq \frac{4^{\lbc}}{\displaystyle\sum_{k=0}^{\pf} \binom{\lbc}{k}
        3^k}.
    \]
    Setting $\pf = 2$, which is typical in practice, as PCR tolerates up to one or two substitutions under standard experimental conditions~\cite{ye2012primer}, we have $\sum_{k=0}^{2}\binom{\lbc}{k} \cdot 3^k > \lbc^2$ for all $\lbc > 0$, yielding
    the convenient bound 
    \begin{equation*}
        \n < \frac{4^{\lbc}}{\lbc^2}.
    \end{equation*}
}

\textbf{Query model and constraints.}
The user wishes to retrieve record $\rec{\f}$ while preserving privacy. To do so, the user generates primers from $\universalSet$ and sends them to the servers. Each server adds the received primers to its pool, performs PCR amplification, sequences the result, and returns a digital response. Any computation on the data is performed only after sequencing. Servers thus operate in a \emph{molecular-then-digital} pipeline. 
Concretely, a server can apply PCR to selectively amplify target strands, sequence the resulting material, and compute arbitrarily on the sequenced output.

A direct implementation of the two-server PIR scheme~\cite{chor_private_1995} would require each server to receive a binary query $\bfq \in \{0,1\}^{\n \times 1}$ and synthesize a targeted PCR primer for every index $i$ with $\bfq[i] = 1$. This is infeasible in the DNA setting: on average, half the entries of $\bfq$ are $1$, meaning $\mathcal{O}(\n)$ specific primers must be synthesized per query, which is prohibitively expensive for large databases. At the same time, DNA-based storage offers properties that can be exploited for PIR. Synthesizing \emph{random} primers is cheap, and PCR amplification time is independent of the pool size --- so accessing a record costs the same regardless of how large the database is. Our protocols are designed around these two properties.

For our protocols, we assume a noiseless biochemical setting. Specifically, we assume that PCR amplification is perfect, meaning there is no inadvertent amplification of non-target strands, provided that the barcodes are designed with the appropriate Hamming distance. Furthermore, we assume that $\Copy$ operations are perfect. We briefly discuss the effects of imperfect copying in Scenario~3 and will provide a comprehensive treatment of experimental noise in future work.

\section{DNA-PIR Protocols} 
\label{sec:adaptingbasicpir}
We present two PIR protocols, each utilizing two servers.
\short{The protocols share a common setup phase and a query generation subroutine $\algoName{Query}$ (see Algorithm \ref{alg:query}). The former parses the database $\dbsPool$ such that each record $\rec{i}$ is divided into $\barcode{i}$, $\payload{i}$ and $\universalPrimer$, i.e., $\rec{i} = (\barcode{i}, \payload{i},\universalPrimer)$. The list of all used barcodes $\barcodePool=\{\barcode{1},…,\barcode{\n}\}$ is then published digitally.} 
\full{The protocols share the same subroutines $\algoName{Setup}$ (Algorithm~\ref{alg:setup}) and $\algoName{Query}$ (Algorithm~\ref{alg:query}). The former takes as input the database $\dbsPool$ and outputs the list of barcodes $\barcodePool$ used in the database.}
The latter generates queries via a set of biochemical operations. First, the user biochemically generates a random subset $\randomPool \subseteq \universalSet \setminus \{\barcodeOfInterest\}$ of barcode sequences via random (degenerate) synthesis.
Next, a biochemical copy $\randomPoolCopy$ of $\randomPool$ is generated via amplification and physical aliquoting (splitting the liquid volume). The user then biochemically synthesizes the target primer $\barcodeOfInterest$ via standard targeted synthesis, and physically adds it to $\randomPoolCopy$ to obtain $\randomPoolStar =\randomPoolCopy\cup\{\barcodeOfInterest\}$. The pools $\randomPool, \randomPoolStar$ are sent to the two servers in random order. The query generation is depicted in Fig.~\ref{fig:query_prep} and detailed in Algorithm~\ref{alg:query}. \short{\vspace{-3ex}}
\begin{figure}[h!]
    \centering
    \includegraphics[width=0.79\linewidth]{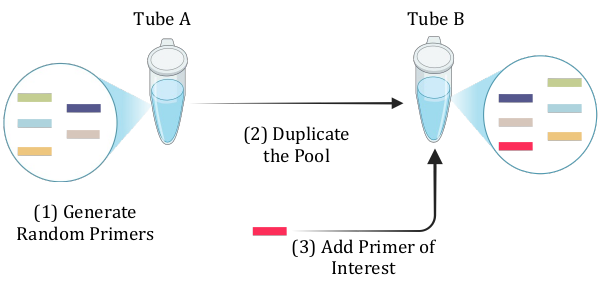}\short{\vspace{-4ex}}
    \caption{A schematic representation of the user's query preparation.}
\short{\vspace{-2ex}}\label{fig:query_prep}
\end{figure}

\begin{figure*}[t]
    \centering
    \includegraphics[width=0.95\linewidth]{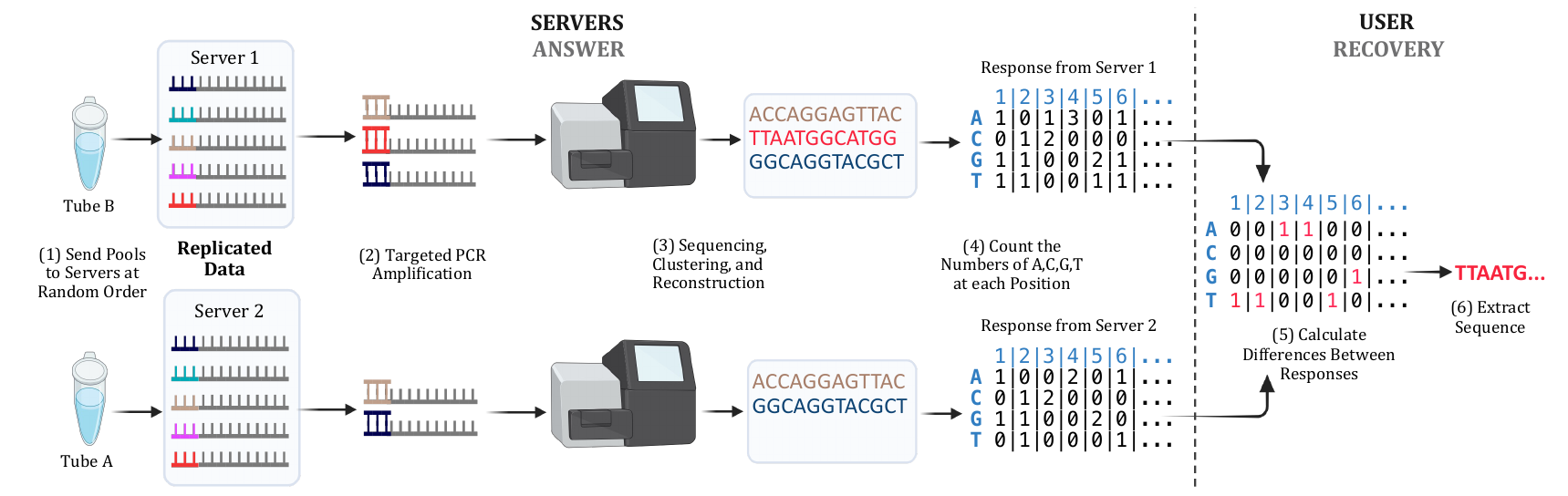}\short{\vspace{-2ex}}
    \caption{\short{Schematic representations of the $\algoName{Answer1}$ subroutine and the recovery process.} \full{Schematic representations of the subroutines $\algoName{Answer1}$ and $\algoName{Recovery1}$.}}
\short{\vspace{-2ex}}\label{fig:PIR_response}
\end{figure*}

Upon receiving the queries, both servers perform a biochemical read-out: they add the query primers to their database pools, perform PCR amplification and sequence the output ($\PCRandSequence(\cdot)$). Afterwards, a computational step of clustering and reconstruction ($\Reconstruct(\cdot)$) is performed to obtain a single correct representation\footnote{Here, we assume perfect recovery from $\Reconstruct(\cdot)$, an assumption that holds under sufficient coverage, and is supported by the performance of current state-of-the-art algorithms, such as~\cite{bar2025scalable, rashtchian2017clustering, ben2024gradhc, 
sabary2024reconstruction},
 when combined with appropriate error-correcting codes (e.g.~\cite{bar2025scalable, lenz2021concatenated,
aharoni2025neural,
khabbaz2026dna}).} of all sequences $\rec{i}$ for which  $\barcode{i}\in\barcodePool\cap\queryPool_{j}$, $j \in \set{2}$, whose output is denoted by $\queriedRecords$. The protocols then differ in how the servers computationally process these sequenced reads to generate answers, and how the user computationally reconstructs the queried record.
\full{
\begin{algorithm}[h]
     \label{alg:setup}
    \caption{$\algoName{Setup}(\dbsPool)$}
    $\{\rec{1}, \dots, \rec{\n}\} \leftarrow \dbsPool$\;
    \For{$i \in \set{\n}$}{
        $[\barcode{i}, \payload{i}, \universalPrimer] \leftarrow \rec{i}$\;
    }
    $\barcodePool \leftarrow \{\barcode{1}, \dots, \barcode{\n}\}$\;
    \Return $\barcodePool$
\end{algorithm}}

\begin{algorithm}[h]
    \label{alg:query}
    \caption{$\algoName{Query}(\barcodePool, \p, \f)$}
    $\randomPool \gets \emptyset$\\
    
    \For{$\bfr \in \universalSet{\setminus\{\barcodeOfInterest\}}$}{
        $\randomPool \gets \randomPool\cup\bfr \text{ with probability } \p$\;
    }
    $\randomPoolCopy \leftarrow \Copy(\randomPool)$\;
    $\randomPoolStar \leftarrow \randomPoolCopy \cup \{ \barcodeOfInterest\}$\;
    $(\queryPool_1, \queryPool_2) \getrand \{(\randomPool, \randomPoolStar), (\randomPoolStar, \randomPool)\}$\;
    \Return $\queryPool_1$, $\queryPool_2$
\end{algorithm}
\short{\vspace{-1ex}}
\subsection*{Protocol 1: Base Counter Approach}
After receiving the query, each server in this protocol counts and returns the number of bases $A$, $C$, $G$, $T$ at each position across all sequenced strands, as depicted in Algorithm~\ref{alg:answer1}. 
The answer consists of four numbers for each position $k \in \set{\l}$. The numbers indicate how many times each base has occurred at position $k$ across all sequenced records.
\short{\vspace{-1ex}}
\begin{algorithm}[h]
    \label{alg:answer1}
    \caption{$\algoName{Answer1}(\dbsPool, \queryPool)$}
    $\ans \leftarrow \{0\}^{4 \times \l}$\;
    $\bfP \leftarrow \PCRandSequence(\dbsPool, \queryPool)$\;
    $\queriedRecords \leftarrow \Reconstruct(\bfP)$\; 
    \For{$\rec{} \in \queriedRecords$}{
        $\ans \leftarrow \ans + \baseToUnit(\rec{})$\;
    }
    \Return $\ans$
\end{algorithm}

\short{
\vspace{-2.5ex}
\begin{algorithm}[h]\label{alg:PIRwithBaseCounter}
\caption{$\mathsf{PIRwithBaseCounters}(\dbsPool, \barcodePool, \p, \f)$}
    $\queryPool_1, \queryPool_2 \leftarrow \algoName{Query}(\barcodePool, \p, \f)$\;
    $\ans_1, \ans_2 \leftarrow \algoName{Answer1}(\dbsPool, \queryPool_1), \algoName{Answer1}(\dbsPool, \queryPool_2)$\;
    $\ans_{\Delta} \leftarrow |\ans_1 - \ans_2|$\;
    \Return $\unitToBase(\ans_{\Delta})$
\end{algorithm}
}
\full{
\begin{algorithm}[h]
    \label{alg:recovery1}
    \caption{$\algoName{Recovery1}(\ans_1, \ans_2)$}
    $\ans_{\Delta} \leftarrow |\ans_1 - \ans_2|$\;
    $\hrec{\f} \leftarrow \unitToBase(\ans_{\Delta})$\;
    \Return $\hrec{\f}$
\end{algorithm}

\begin{algorithm}[h]\label{alg:PIRwithBaseCounter}
    \caption{$\mathsf{PIRwithBaseCounters}(\dbsPool, \p, \f)$}
    $\barcodePool \leftarrow \algoName{Setup}(\dbsPool)$\;
    $\queryPool_1, \queryPool_2 \leftarrow \algoName{Query}(\barcodePool, \p, \f)$\;
    $\ans_1, \ans_2 \leftarrow \algoName{Answer1}(\dbsPool, \queryPool_1), \algoName{Answer1}(\dbsPool, \queryPool_2)$\;
    $\hrec{\f} \leftarrow \algoName{Recovery1}(\ans_1, \ans_2)$\;
    \Return $\hrec{\f}$
\end{algorithm}}
After receiving the servers' answers, the user recovers the queried record by taking the difference of the four numbers for each position in the answer matrix \full{(Algorithm~\ref{alg:recovery1})}. In each column of the difference matrix, the user will have a 1 for one of the bases and a 0 for the rest. This way, the user can decode the sequence, see Algorithm~\ref{alg:PIRwithBaseCounter} and Fig.~\ref{fig:PIR_response}. 
While intuitive, this protocol incurs a high communication cost since each answer consists of 
$\bigo{\l \log \n}$ bits. Therefore, we use this protocol for presentation purposes only (see Fig.~\ref {fig:PIR_response} for an illustration) and introduce next a more efficient protocol and analyze it.

\subsection*{Protocol 2: XOR Approach}
After receiving the query, each server in this protocol converts the sequenced strands into binary sequences and returns the XOR of the resulting sequences to the user, see Algorithm~\ref{alg:pirwithxor}\short{. A pictorial illustration can be found in~\cite{erdogan2026private-extended}.}\full{ and Fig.~\ref{fig:xor_response} for an illustration.}
\label{subsec:pirwithxor}

The servers use $\baseToBits(\cdot)$ to convert each read sequence into a binary sequence by replacing each base with its corresponding two-bit sequence (Algorithm~\ref{alg:answer2}). To compress the answer, each server then XORs all the sequences and sends back a single sequence\short{.}\full{, see Algorithm~\ref{alg:recovery2} and Figure~\ref{fig:xor_response}.} For the user, downloading two bits for each position $k$, i.e., a total of $2\l$ bits, is much more efficient than downloading $\bigo{\l \log \n}$ bits.

\begin{algorithm}[h]
    \label{alg:answer2}
    \caption{$\algoName{Answer2}(\dbsPool, \queryPool)$}
    $\ans \leftarrow \{0\}^{2 \times \l}$\;
    $\bfP \leftarrow \PCRandSequence(\dbsPool, \queryPool)$\;
    $\queriedRecords \leftarrow \Reconstruct(\bfP)$\;
    \For{$\rec{} \in \queriedRecords$}{
        $\ans \leftarrow \ans \oplus \baseToBits(\rec{})$\;
    }
    \Return $\ans$
\end{algorithm}

\short{
\vspace{-2ex}
\begin{algorithm}[h]
    \label{alg:pirwithxor}
    \caption{$\PIRwithXOR(\dbsPool, \barcodePool, \p, \f)$}
    $\queryPool_1, \queryPool_2 \leftarrow \algoName{Query}(\barcodePool, \p, \f)$\;
    $\ans_1, \ans_2 \leftarrow \algoName{Answer2}(\dbsPool, \queryPool_1), \algoName{Answer2}(\dbsPool, \queryPool_2)$\;
    \Return  $\bitsToBase(\ans_1 \oplus \ans_2)$
\end{algorithm}
\vspace{-2ex}
}
\full{
\begin{algorithm}[h]
    \label{alg:recovery2}
    \caption{$\algoName{Recovery2}(\ans_1, \ans_2)$}
    $\hrec{\f} = \bitsToBase(\ans_1 \oplus \ans_2)$\;
    \Return $\hrec{\f}$
\end{algorithm}
\begin{algorithm}[h]
    \label{alg:pirwithxor}
    \caption{$\PIRwithXOR(\dbsPool, \p, \f)$}
    $\barcodePool \leftarrow \algoName{Setup}(\dbsPool)$\;
    $\queryPool_1, \queryPool_2 \leftarrow \algoName{Query}(\barcodePool, \p, \f)$\;
    $\ans_1, \ans_2 \leftarrow \algoName{Answer2}(\dbsPool, \queryPool_1), \algoName{Answer2}(\dbsPool, \queryPool_2)$\;
    $\hrec{\f} \leftarrow \algoName{Recovery2}(\ans_1, \ans_2)$\;
    \Return $\hrec{\f}$
\end{algorithm}}

After receiving the servers' answers, the user XORs them and obtains the record of interest.  
\short{\vspace{-1ex}}
\full{
\begin{figure}[h]
    \centering
    \includegraphics[width=0.95\linewidth]{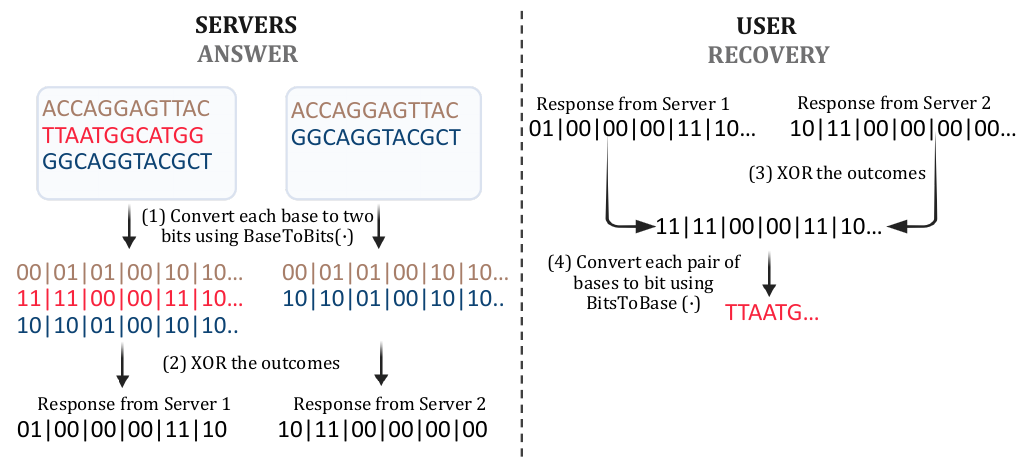}
    \caption{Schematic representations of the subroutines $\algoName{Answer2}$ and $\algoName{Recovery2}$.}
    \label{fig:xor_response}
\end{figure}
}

\begin{remark}
    A PIR scheme is evaluated by the \emph{PIR rate} defined as the amount of information retrieved divided by the download size. In our $\PIRwithXOR$ protocol, the user retrieves $\rec{\f} \in \baseSet^{\l}$ by downloading a $2 \l$-bit sequence from each of the $\m =2$ servers; hence achieving a \short{$\text{rate} = 2 \l /(\m 2 \l) = 1 / 2$, matching the capacity of classical IT PIR on replicated data with $m=2$ and $z=1$.}
\full{
\[
\text{rate} = \frac{2 \l}{\m 2 \l} = \frac{1}{\m}.
\]
}
\end{remark}

\section{Three PIR Scenarios}
We now instantiate $\PIRwithXOR$ across three scenarios. For each scenario, depending on the reliability of the output of a single instantiation, we utilize $\m/2$ instantiations over $\m/2$ different server pairs for a successful recovery with overwhelming probability. \short{The proofs for theorems and corollaries can be found in~\cite{erdogan2026private-extended}.}
\begin{remark}
    Algorithms $\algoName{Answer1}$ and $\algoName{Answer2}$ work on the whole records and not just over the payloads. For the first two scenarios below, these algorithms can work solely on the payloads, cutting down communication and computation costs. For the third scenario, as we explain, processing the barcodes is essential for the user to verify that the payload it recovered belongs together with the barcode it initially queried for.
\end{remark}

\subsection{Scenario 1}
\label{sec:scenario1}

In this scenario, the user can check whether $\barcodeOfInterest \in \randomPool$, and the copy $\randomPoolCopy$ of $\randomPool$ is perfect.
In practice, when synthesizing random sequences, the user cannot avoid sampling $\barcodeOfInterest$ specifically; it is as likely to occur as any other sequence. Yet, we assume that the user can check the contents of $\randomPool$. This way, it can sample a new $\randomPool$ until $\barcodeOfInterest \neq \randomPool$ to use for generating its queries. Thus, in Algorithm~\ref{alg:query}, the loop is over $\universalSet \setminus \{\barcodeOfInterest\}$. We lift this assumption later in Scenario~2.

In this case, because the user can guarantee that $\randomPoolStar \setminus \randomPool = \{\barcodeOfInterest\}$, the record $\rec{\f}$ can be retrieved from just $\m = 2$ servers.

To analyze the leakage of $\PIRwithXOR$, we start by investigating an easier version of the protocol that fixes the size $\r$ of the random pool. This is intuitive from a technical perspective, but we show that it does not achieve perfect IT privacy.

\begin{theorem}\label{thm:LeakageConstantr} Let $\r \triangleq \size{\randomPool}$ be a constant s.t. $0 \le \r \le \un - 1$. The leakage $\leakage = \mutualInfo{(\queryRV_{j};\indexRV)}$, $j \in [2]$, is given by 
\begin{align*}
    \leakage = \log \n - \frac{1}{2}\sum_{k = 0}^{n - 1} \frac{\binom{\n - 1}{k} \binom{\un - \n}{\r - k}}{\binom{\un - 1}{\r}} \log ((k + 1)(\n - k))
\end{align*}
and $\leakage \ge 1 - \log(1 + 1 / \n)$ with equality if and only if $\n = 1$.
\end{theorem}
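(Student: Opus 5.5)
The plan is to compute $\mutualInfo(\queryRV_j;\indexRV) = \entropy(\indexRV) - \entropy(\indexRV \mid \queryRV_j)$ directly, with $\indexRV$ uniform on $\set{\n}$, so that $\entropy(\indexRV) = \log \n$. The key step is to identify the posterior of $\indexRV$ given the pool a server receives. By the random ordering in Algorithm~\ref{alg:query}, server $j$ receives $\randomPool$ or $\randomPoolStar$, each with probability $1/2$. In the fixed-size version, $\randomPool$ is a uniform $\r$-subset of $\universalSet\setminus\{\barcodeOfInterest\}$, and the copy is perfect.

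\textbf{Posterior.} Fix a realisation $\cQ$ of the received pool.
\begin{itemize}[label={}]
\item If $\size{\cQ} = \r$, then $\Pr[\queryRV_j = \cQ \mid \indexRV = \f] = \tfrac{1}{2}\binom{\un-1}{\r}^{-1}$ when $\barcodeOfInterest \notin \cQ$, and $0$ otherwise.
\item If $\size{\cQ} = \r+1$, then $\Pr[\queryRV_j = \cQ \mid \indexRV = \f] = \tfrac12\binom{\un-1}{\r}^{-1}$ when $\barcodeOfInterest \in \cQ$, and $0$ otherwise.
\end{itemize}
Since $\r+1 \neq \r$, the size of $\cQ$ reveals which case occurred. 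By Bayes' rule, the posterior is uniform on $\barcodePool\setminus\cQ$ in the first case and uniform on $\barcodePool\cap\cQ$ in the second.

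\textbf{Reduction to a hypergeometric variable.} Write $K = \size{\randomPool \cap (\barcodePool\setminus\{\barcodeOfInterest\})}$. In the first case $\size{\barcodePool\setminus\cQ} = \n-K$, and in the second $\size{\barcodePool\cap\cQ} = K+1$. Hence
\[
\entropy(\indexRV\mid\queryRV_j) = \tfrac12\mathbb{E}[\log(\n-K)] + \tfrac12 \mathbb{E}[\log(K+1)] = \tfrac12\mathbb{E}[\log((K+1)(\n-K))].
\]
Conditioned on any $\f$, $K$ is hypergeometric: $\r$ draws from $\un-1$ items, $\n-1$ of which are marked. Its distribution does not depend on $\f$, so $\Pr[K=k] = \binom{\n-1}{k}\binom{\un-\n}{\r-k}/\binom{\un-1}{\r}$. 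This gives the stated formula, with terms outside the support vanishing by the convention for binomials.

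\textbf{Lower bound.} By AM--GM, $(k+1)(\n-k) \le \left(\frac{\n+1}{2}\right)^2$ for every $k$. Therefore $\entropy(\indexRV\mid\queryRV_j) \le \log(\n+1) - 1$, and
\[
\leakage \ge \log \n - \log(\n+1) + 1 = 1 - \log(1+1/\n).
\]

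\textbf{Equality case.} I expect this to be the only delicate point. Equality holds iff $K = (\n-1)/2$ almost surely, so $K$ must be degenerate. A hypergeometric variable is degenerate only if one of the following holds:
\begin{itemize}[label={}]
\item $\r = 0$, giving $K = 0$;
\item $\r = \un-1$, giving $K = \n-1$;
\item $\n = 1$, giving $K = 0$;
\item $\n - 1 = \un - 1$, giving $K = \r$.
\end{itemize}
The last case is excluded because $\n < \un$, which follows from the sphere-packing bound on $\barcodePool$. In the case $\r = 0$, we get $0 = (\n-1)/2$ iff $\n = 1$. In the case $\r = \un-1$, we get $\n-1 = (\n-1)/2$ iff $\n = 1$. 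Conversely, when $\n = 1$ we have $K = 0 = (\n-1)/2$, so equality holds. Hence equality holds iff $\n = 1$.
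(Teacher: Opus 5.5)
Your proof is correct, but it takes a different route from the paper's.

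\textbf{The paper's route.} The paper expands $\mutualInfo(\queryRV;\indexRV)=\entropy(\queryRV)-\entropy(\queryRV\mid\indexRV)$ and reads off $\entropy(\queryRV\mid\indexRV)=1+\log\binom{\un-1}{\r}$. It then computes $\entropy(\queryRV)$ by summing the probabilities of all pools of sizes $\r$ and $\r+1$, grouped by $k=\size{\barcodePool\cap\cQ}$. The hypergeometric weights appear only after the identities $\binom{\n}{k}(\n-k)=\n\binom{\n-1}{k}$ and $\binom{\n}{k}k=\n\binom{\n-1}{k-1}$, a reindexing, and Vandermonde's identity. The large terms $\log\binom{\un-1}{\r}$ cancel only at the end.

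\textbf{Your route.} You use the other decomposition, $\entropy(\indexRV)-\entropy(\indexRV\mid\queryRV_j)$, and observe that the server's posterior is uniform on a candidate set of size $\n-K$ or $K+1$. This gives the formula almost immediately. It also explains the formula: the leakage is $\log\n$ minus the expected log-size of the server's candidate set, where $K$ is hypergeometric.

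\textbf{Equality case.} Your treatment is more complete than the paper's. The paper only notes the pointwise bound $(k+1)(\n-k)\le((\n+1)/2)^2$ and asserts equality iff $\n=1$, without arguing that the weighted average can attain the bound only in that case. You show the following.
\begin{itemize}
\item Equality forces $K=(\n-1)/2$ almost surely.
\item A hypergeometric variable is degenerate only if $\r=0$, $\r=\un-1$, $\n=1$, or $\n=\un$.
\item The case $\n=\un$ is excluded by $\n<\un$. Parity would also exclude it, since $\un=4^{\lbc}$ is even and so $(\un-1)/2$ is not an integer.
\end{itemize}

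\textbf{What the paper's route buys.} It produces $\entropy(\queryRV)$ explicitly. Its computation is also in the same style as the paper's proof of the Bernoulli-$\p$ case in Theorem~\ref{thm:scenario1infoleak}.
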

\full{
\begin{proof}
    Without loss of generality we give the proof for $\queryRV_1$ and hence drop the subscript. We compute the leakage as 
    \begin{equation}\label{eq:leakage}
    \leakage \triangleq \mutualInfo(\queryRV;\indexRV) = \entropy(\queryRV) - \entropy(\queryRV \mid \indexRV).
    \end{equation}
    To compute the second term, note that given $\indexRV$, the randomness of $\queryRV$ stems from: 
    \begin{enumerate*}[label=\emph{(\roman*)}]
        \item receiving $\queryRV = \randomPool$ or $\queryRV = \randomPoolStar$, which happens with probability $1/2$ each; and
        \item the randomness of sampling a subset $\randomPool$ of size $\r$ uniformly at random from $\universalSet\setminus\{\barcodeOfInterest\}$. 
    \end{enumerate*}
    Hence, 
    \begin{equation}\label{eq:hqgivenf}
    \entropy(\queryRV \mid \indexRV) = 1 + \log \binom{\un - 1}{\r}.
    \end{equation}
    Now let us analyze $\entropy(\queryRV)$. According to the protocol, $\Pr[\queryPool = \randomPool] = \Pr[\queryPool = \randomPoolStar] = 1/2$. Then, $\forall \randomPool \subset \universalSet$ with $\size{\randomPool} = \r$, $\randomPool$ can be selected if and only if $\barcodeOfInterest \notin \randomPool$. Given $\barcodeOfInterest \notin \randomPool$, the probability of selecting $\randomPool$ is $1 / \binom{\un-1}{\r}$. Any element of $\barcodePool \setminus \randomPool$ can be chosen as the queried index $\barcodeOfInterest$ with probability $1 / \n$, so
    \begin{align*}
        \Pr[\queryRV = \randomPool] = \frac{1}{2} \cdot \frac{\n - \size{\barcodePool \cap \randomPool}}{\n} \cdot \frac{1}{\binom{\un - 1}{\r}}.
    \end{align*}
    Similarly, $\forall \randomPoolStar \subseteq \universalSet$ with $\size{\randomPoolStar} = \r + 1$, $\randomPoolStar$ can be obtained if and only if $\barcodeOfInterest \in \randomPoolStar$. Since $\randomPoolStar = \randomPool \cup \{\barcodeOfInterest\}$ for a unique pool $\randomPool \subseteq \universalSet \setminus \{\barcodeOfInterest\}$, any element of $\barcodePool \cap \randomPoolStar$ can be chosen as the queried index $\barcodeOfInterest$ with probability $1/\n$, so
    \begin{align*}
        \Pr[\queryRV = \randomPoolStar] 
        & = \frac{1}{2} \cdot \Pr[\barcodeOfInterest \in \randomPoolStar] \cdot \frac{1}{\binom{\un - 1}{\r}} \\
        & = \frac{1}{2} \cdot \frac{\size{\barcodePool \cap \randomPoolStar}}{\n} \cdot \frac{1}{\binom{\un - 1}{\r}}.
    \end{align*}

    To compute $\entropy(\queryRV)$, we can now iterate over all $k = \size{\barcodePool \cap \randomPool}$ by choosing $k$ elements from $\barcodePool$ and $\r - k$ elements from $\universalSet \setminus \barcodePool$ and over all $k = \size{\barcodePool \cap \randomPoolStar}$ by choosing $k$ elements from $\barcodePool$ and $\r - k + 1$ elements from $\universalSet \setminus \barcodePool$. Then, $\entropy(\queryRV)$ can be written as
    \begin{align*}
        \entropy(\queryRV)
        & = - \sum_{k = 0}^{\n} \binom{\n}{k} \binom{\un - \n}{\r - k} \cdot \frac{\frac{\n - k}{2\n}}{\binom{\un - 1}{\r}} \cdot \log \left( \frac{\frac{\n - k}{2\n}}{\binom{\un - 1}{\r}} \right) \\
        & \quad - \sum_{k = 0}^{\n} \binom{\n}{k} \binom{\un - \n}{\r - k + 1} \cdot \frac{\frac{k}{2\n}}{\binom{\un - 1}{\r}} \cdot \log \left( \frac{\frac{k}{2\n}}{\binom{\un - 1}{\r}} \right),
    \end{align*}
    where terms with $\n - k = 0$ in the first sum and $k = 0$ in the second vanish by the convention $0 \log 0 = 0$. We apply the identity $\binom{\n}{k}(\n - k) = \n\binom{\n-1}{k}$ to the first sum, and reindex $k \mapsto k - 1$ in the second sum using $\binom{\n}{k} k = \n\binom{\n-1}{k-1}$. This way, the common factor $\binom{\n-1}{k}\binom{\un-\n}{\r-k}$ appears in both sums and the upper limit of both becomes $\n - 1$:
    \begin{align*}
        \entropy(\queryRV)
        & = \sum_{k = 0}^{\n - 1} \frac{\binom{\n - 1}{k} \binom{\un - \n}{\r - k}}{2\binom{\un - 1}{\r}} \cdot \left[ \log \binom{\un - 1}{\r} + \log \frac{2\n}{\n - k}\right] \\
        & \quad + \sum_{k = 0}^{\n - 1} \frac{\binom{\n - 1}{k} \binom{\un - \n}{\r - k}}{2\binom{\un - 1}{\r}} \cdot \left[ \log \binom{\un - 1}{\r} + \log \frac{2\n}{k + 1}\right].
    \end{align*}
    We use Vandermonde's identity as $\sum_{k=0}^{\n-1}\binom{\n-1}{k}\binom{\un-\n}{\r-k} = \binom{\un-1}{\r}$. Combining the two sums and simplifying,
    \begin{align}\label{eq:hq}
        \entropy(\queryRV)
        & = \log \n + \log \binom{\un - 1}{\r} + 1 \nonumber \\
        & \quad - \frac{1}{2}\sum_{k = 0}^{\n - 1} \frac{\binom{\n - 1}{k} \binom{\un - \n}{\r - k}}{\binom{\un - 1}{\r}} \log ((k + 1)(\n - k)).
    \end{align}
    From~\cref{eq:hq,eq:hqgivenf,eq:leakage},
    \begin{align*}
        \leakage = \log \n - \frac{1}{2}\sum_{k = 0}^{n - 1} \frac{\binom{\n - 1}{k} \binom{\un - \n}{\r - k}}{\binom{\un - 1}{\r}} \log ((k + 1)(\n - k)).
    \end{align*}
    Observe for the second term on the RHS that the weights are the hypergeometric probability mass function, so they sum to 1.
    Also, the expression $(k + 1) + (\n - k) = \n + 1$ is constant in $k$, so $(k+1)(\n - k) \leq ((\n + 1)/2)^2$ pointwise.
    Then,
    \begin{align*}
        \frac{1}{2}\sum_{k = 0}^{\n - 1} \frac{\binom{\n - 1}{k} \binom{\un - \n}{\r - k}}{\binom{\un - 1}{\r}} \log ((k + 1)(\n - k)) 
        & \le \frac{1}{2}\log \left(\frac{\n + 1}{2}\right)^2 \\
        & = \log(\n + 1) - 1
    \end{align*}
    with equality if and only if $\n = 1$. Finally, $\leakage \ge \log \n - \log(n+1) + 1 = 1 - \log((\n+1) / \n) = 1 - \log(1 + 1/\n)$.
\end{proof}}
\short{\vspace{-2ex}}
\short{\vspace{-3ex}}
\begin{figure}[h]
    \centering
    \includegraphics[width=\linewidth]{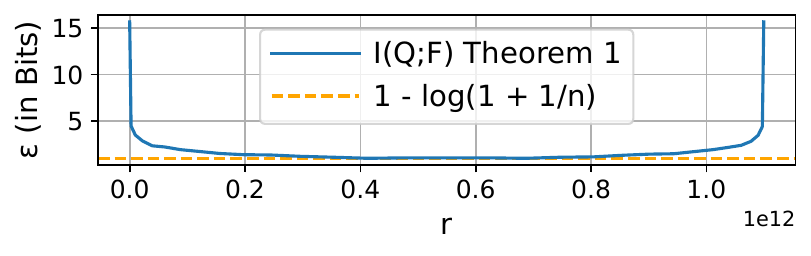}
    \short{\vspace{-5ex}}
    \caption{Leakage for fixed $\r$. $\un \approx 1.1 \cdot 10^{12}, \n \approx 2.7 \cdot 10^{9}$. $\entropy(\indexRV) = 31.36$.}
    \label{fig:informationleakagefixedr}\short{\vspace{-2.5ex}}
\end{figure}
\short{\vspace{1ex}}
    The information leakage for $\lbc = 20$ is plotted in Fig.~\ref{fig:informationleakagefixedr}. Given $\queryPool$, the server can count $q\triangleq |\queryPool|$ and deduce whether $\queryPool=\randomPool$ if $q=\r$ or $\queryPool=\randomPoolStar$, otherwise. For small and large $\r$, this means that with $50\%$ chance, the server knows that $\f$ is from a small set, and the leakage is high. For $k = \size{\barcodePool \cap \randomPool} = \lfloor \n /2 \rfloor$, the server knows either $\f$ is among a set of size $\n - \lfloor \n / 2 \rfloor$, or $\f$ is among a set of size $\lfloor \n / 2 \rfloor + 1$, leaking roughly 1 bit of information.

Theorem~\ref{thm:LeakageConstantr} shows that the practical fixed $\r$ case cannot achieve $0$ leakage for any non-trivial database size of $\n > 1$. Hence, we analyze the leakage for $\PIRwithXOR$ where $\r = \size{\randomPool} \leftarrow \Bin(\un - 1, \p)$, as for any $\bfr \in \universalSet \setminus \{\barcodeOfInterest\}$, $\bfr \in \randomPool$ with probability $\p$. The leakage is plotted in Fig.~\ref{fig:informationleakagefixedp}. In this case, the server can only deduce probabilistically whether its query includes the queried index. For $\p/2$, this probability is  $1/2$ and $\PIRwithXOR$ achieves perfect IT privacy.
\begin{theorem}
    \label{thm:scenario1infoleak}
    Let $\randomPool \subset \universalSet$ be a random pool of sequences and for any sequence $\bfr \in \universalSet \setminus \{\barcodeOfInterest\}$, let $\Pr[\bfr \in \randomPool] = \p$ for $0 < \p < 1$. Then the leakage $\leakage = \mutualInfo{(\queryRV_{j};\indexRV)}$, $j \in [2]$ is given by
    \begin{multline*}
        \leakage = \log \n + \frac{1}{2} \log \p (1 - \p)
        - \frac{1}{2\n \p (1 - \p)} \\
        \cdot \sum_{\q = 0}^{\un} \binom{\un}{\q} \p^{\q} (1 - \p)^{\un - \q} \sum_{k = 0}^{\q} \frac{\binom{\n}{k}\binom{\un - \n}{\q - k}}{\binom{\un}{\q}} \kterm \log \kterm,
    \end{multline*}
    where $\kterm := \n \p + k(1 - 2\p)$.
\end{theorem}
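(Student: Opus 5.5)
We follow the structure of the proof of Theorem~\ref{thm:LeakageConstantr}. By symmetry we treat $\queryRV=\queryRV_1$ only, and write $\leakage=\entropy(\queryRV)-\entropy(\queryRV\mid\indexRV)$.

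\textbf{The conditional term.} Given $\indexRV=\f$, the query is $\randomPool$ or $\randomPoolStar$ with probability $1/2$ each. The pool $\randomPool$ is a product-Bernoulli subset of $\universalSet\setminus\{\barcodeOfInterest\}$, so it has entropy $(\un-1)\binary\Entropy(\p)$. The two branches are distinguishable given $\f$, since $\barcodeOfInterest\in\queryPool$ exactly in the $\randomPoolStar$ branch. Hence
\[
\entropy(\queryRV\mid\indexRV)=1+(\un-1)\binaryEntropy(\p).
\]

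\textbf{The distribution of the query.} Fix an arbitrary subset $\queryPool\subseteq\universalSet$ with $\size{\queryPool}=\q$ and $k=\size{\barcodePool\cap\queryPool}$. We sum over $\f$, which is uniform on $\barcodePool$, and over the coin choosing which pool is sent.

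For $\f$ with $\barcodeOfInterest\notin\queryPool$ (there are $\n-k$ such $\f$), the query arises only as $\randomPool=\queryPool$, with probability $\p^{\q}(1-\p)^{\un-1-\q}$.

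For $\f$ with $\barcodeOfInterest\in\queryPool$ (there are $k$ such $\f$), the query arises only as $\randomPoolStar=\queryPool$. This means $\randomPool=\queryPool\setminus\{\barcodeOfInterest\}$, with probability $\p^{\q-1}(1-\p)^{\un-\q}$.

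Combining the two cases,
\[
\Pr[\queryRV=\queryPool]=\frac{\p^{\q-1}(1-\p)^{\un-1-\q}}{2\n}\bigl[(\n-k)\p+k(1-\p)\bigr]=\frac{\p^{\q}(1-\p)^{\un-\q}}{2\n\p(1-\p)}\,\kterm .
\]
This is where $\kterm=\n\p+k(1-2\p)$ comes from. The probability depends on $\queryPool$ only through $(\q,k)$. The number of such $\queryPool$ is $\binom{\n}{k}\binom{\un-\n}{\q-k}$, which is $\binom{\un}{\q}$ times the hypergeometric weight. This produces exactly the double sum in the statement.

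\textbf{Computing $\entropy(\queryRV)$.} We expand $-\sum\Pr\log\Pr$ and split the logarithm into three parts: $\log\bigl(\p^{\q}(1-\p)^{\un-\q}\bigr)$, the constant $-\log(2\n\p(1-\p))$, and $\log\kterm$.

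The constant part contributes $\log(2\n\p(1-\p))$, because the probabilities sum to $1$. This sum-to-one fact can be checked directly: the expected hypergeometric $k$ is $\q\n/\un$, so $\mathbb{E}[\kterm]$ reduces correctly under the binomial average over $\q$.

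The $\log\kterm$ part gives precisely the displayed term $-\frac{1}{2\n\p(1-\p)}\sum\cdots\kterm\log\kterm$.

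\textbf{Main obstacle.} The hard step is the first part, $-\sum\Pr[\queryRV=\queryPool]\,\bigl[\q\log\p+(\un-\q)\log(1-\p)\bigr]$. To evaluate it we need $\mathbb{E}[\size{\queryRV}]$ under the true law of $\queryRV$. This law is not the product-Bernoulli law: $\size{\queryRV}$ equals $\size{\randomPool}$ or $\size{\randomPool}+1$ with probability $1/2$ each, so $\mathbb{E}[\size{\queryRV}]=(\un-1)\p+\tfrac12$. Substituting, this part becomes $(\un-1)\binaryEntropy(\p)-\tfrac12\log\p-\tfrac12\log(1-\p)$, taking care to use $\log\p+\log(1-\p)$ rather than the entropy form.

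\textbf{Assembling.} Subtracting $1+(\un-1)\binaryEntropy(\p)$ cancels the $1$ in $\log 2$ and the $\binaryEntropy$ terms. What remains is
\[
\log\n+\log\p(1-\p)-\tfrac12\log\p(1-\p)=\log\n+\tfrac12\log\p(1-\p),
\]
together with the $\kterm\log\kterm$ sum, which is the claimed expression. As a sanity check, at $\p=1/2$ we have $\kterm=\n/2$ for every $k$, and the formula should evaluate to $0$, consistent with the perfect-privacy remark before the theorem.
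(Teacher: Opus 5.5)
Your proof is correct and follows essentially the paper's route: the same conditional entropy $1+(\un-1)\binaryEntropy(\p)$, the same marginal $\Pr[\queryRV=\queryPool]$ depending only on $(\q,k)$ through $\kterm$, and the same term-by-term split of $-\sum\Pr\log\Pr$. The only difference is cosmetic. The paper evaluates its $h_1,h_2,h_3$ pieces explicitly via Vandermonde's identity, whereas you get the constant piece from normalization and the $\q$-dependent piece from $\mathbb{E}[\size{\queryRV}]=(\un-1)\p+\tfrac12$; this yields the same values more directly.
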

\full{\begin{proof}
    Given $\indexRV$, the randomness of $\queryRV$ stems from:
    \begin{enumerate*}[label=\emph{(\roman*)}]
    \item $\un - 1$ Bernoulli random variables with success probability $\p$,
    \item and the uniformly random choice between $\randomPool$ and $\randomPoolStar$.
    \end{enumerate*}
    Thus
    \[
        \entropy(\queryRV \mid \indexRV) = (\un - 1) \binaryEntropy(\p) + 1.
    \]
    Now we analyze $\entropy(\queryRV)$. For $\q = \size{\queryPool}$, $k = \size{\barcodePool \cap \queryPool}$, we can write
    \begin{align*}
        \Pr[\queryRV = \queryPool] 
        & = \frac{1}{2}\Pr[\barcodeOfInterest \in \queryPool] \p^{\q - 1} (1 - \p)^{\un - \q} \\
        & \quad + \frac{1}{2}\Pr[\barcodeOfInterest \notin \queryPool] \p^{\q} (1 - \p)^{\un - \q - 1} \\
        & = \frac{1}{2\n} \p^{\q - 1} (1 - \p)^{\un - \q - 1} (\n\p + k(1 - 2\p))
    \end{align*}
    since $\Pr[\barcodeOfInterest \in \queryPool] = k / \n$. We can now compute $\entropy(\queryRV)$ as
    \begin{align*}
        \entropy(\queryRV) 
        & = - \sum_{\q = 0}^{\un} \sum_{k = 0}^{\q} \binom{\n}{k}\binom{\un - \n}{\q - k} \Pr[\queryRV = \queryPool] \log \Pr[\queryRV = \queryPool].
    \end{align*}
    The sum above can be split into four parts since
    \begin{multline*}
        \log \Pr[\queryRV = \queryPool] = - \log 2\n + (\q - 1) \log \p \\ 
        + (\un - \q - 1) \log (1 - \p) + \log(\n\p + k(1 - 2\p)).
    \end{multline*}
    This way, we can write $\entropy(\queryRV) = h_1 + h_2 + h_3 + h_4$ and calculate each part individually. For $h_1$, $h_2$ and $h_3$ below, we utilize Vandermonde's identity as
    \begin{equation*}
        \sum_{k = 0}^{\q} \binom{\n}{k} \binom{\un - \n}{\q - k} \kterm = \n\p \binom{\un}{\q} + \n(1 - 2\p)\binom{\un - 1}{\q - 1}.
    \end{equation*}
    \begin{align*}
        h_1 
        & = \log2\n \sum_{\q = 0}^{\un} \sum_{k = 0}^{\q} \binom{\n}{k}\binom{\un - \n}{\q - k} \\
        & \quad \cdot \frac{1}{2\n} \p^{\q - 1} (1 - \p)^{\un - \q - 1} (\n\p + k(1 - 2\p)) \\
        & = \frac{\log2\n}{2\n} \sum_{\q = 0}^{\un} \p^{\q - 1} (1 - \p)^{\un - \q - 1} \\
        & \quad \cdot \sum_{k = 0}^{\q} \binom{\n}{k}\binom{\un - \n}{\q - k} (\n\p + k(1 - 2\p)) \\
        & = \frac{\log2\n}{2} \sum_{\q = 0}^{\un} \p^{\q - 1} (1 - \p)^{\un - \q - 1} \\
        & \quad \cdot \left(\p \binom{\un}{\q} + (1 -2\p) \binom{\un - 1}{\q - 1} \right) \\
        & = \frac{\log2\n}{2(1 - \p)} + \frac{\log2\n}{2\p(1 - \p)} \frac{1 - 2\p}{\un} \un \p = \log 2\n.
    \end{align*}
    \begin{align*}
        h_2
        & = - \log \p \sum_{\q = 0}^{\un} \sum_{k = 0}^{\q} \binom{\n}{k}\binom{\un - \n}{\q - k} \\
        & \quad \cdot \frac{1}{2\n} \p^{\q - 1} (1 - \p)^{\un - \q - 1} (\n\p + k(1 - 2\p)) (\q - 1) \\
        & = - \frac{\log \p}{2\n} \sum_{\q = 0}^{\un} (\q - 1) \p^{\q - 1} (1 - \p)^{\un - \q - 1} \\
        & \quad \cdot \left(\p \binom{\un}{\q} + (1 -2\p) \binom{\un - 1}{\q - 1} \right) \\
        & = -\frac{\log\p}{2(1 - \p)} (\un\p - 1 + (\un - 1)\p(1 - 2\p)) \\
        & = -\left((\un - 1)\p - \tfrac{1}{2}\right)\log\p.
    \end{align*}
    \begin{align*}
        h_3
        & = - \log (1 - \p) \sum_{\q = 0}^{\un} \sum_{k = 0}^{\q} \binom{\n}{k}\binom{\un - \n}{\q - k} \\
        & \quad \cdot \frac{1}{2\n} \p^{\q - 1} (1 - \p)^{\un - \q - 1} (\n\p + k(1 - 2\p)) (\un - \q - 1) \\
        & = - \frac{\log (1 - \p)}{2\n} \sum_{\q = 0}^{\un} (\un - \q - 1) \p^{\q - 1} (1 - \p)^{\un - \q - 1} \\
        & \quad \cdot \left(\p \binom{\un}{\q} + (1 -2\p) \binom{\un - 1}{\q - 1} \right) \\
        & = - \frac{\log(1 - \p)}{2}(2\un(1 - \p) + 2\p - 3) \\
        & = - \left((\un - 1)(1 - \p) - \tfrac{1}{2}\right) \log(1 - \p). 
    \end{align*}
    \begin{align*}
        h_4 
        & = - \frac{1}{2\n} \sum_{\q = 0}^{\un} \sum_{k = 0}^{\q} \binom{\n}{k}\binom{\un - \n}{\q - k} \p^{\q - 1} (1 - \p)^{\un - \q - 1} \\
        & \quad \cdot (\n\p + k(1 - 2\p)) \log(\n\p + k(1 - 2\p)) \\
        & = - \frac{1}{2\n \p (1 - \p)} \sum_{\q = 0}^{\un} \binom{\un}{\q} \p^{\q} (1 - \p)^{\un - \q} \\
        & \quad \cdot \sum_{k = 0}^{\q} \frac{\binom{\n}{k}\binom{\un - \n}{\q - k}}{\binom{\un}{\q}} (\n\p + k(1 - 2\p)) \log(\n\p + k(1 - 2\p)).
    \end{align*}
    Then the mutual information can be computed as 
    \[
    \mutualInfo(\queryRV; \indexRV) = h_1 + h_2 + h_3 + h_4 - \entropy(\queryRV \mid \indexRV).
    \]
    Combining $h_2$ and $h_3$,
    \begin{align*}
        h_2 + h_3
        & = -\left((\un - 1)\p - \tfrac{1}{2}\right)\log \p \\
        & \quad - \left((\un - 1)(1 - \p) - \tfrac{1}{2}\right)\log(1 - \p) \\
        & = (\un - 1)\binaryEntropy(\p) + \tfrac{1}{2}\log \p(1 - \p).
    \end{align*}
    Since $h_1 = \log 2\n$ and $\entropy(\queryRV \mid \indexRV) = (\un - 1)\binaryEntropy(\p) + 1$,
    \begin{align*}
        \leakage
        & = h_1 + h_2 + h_3 + h_4 - \entropy(\queryRV \mid \indexRV) \\
        & = \log 2\n + (\un - 1)\binaryEntropy(\p) + \tfrac{1}{2}\log \p(1 - \p) \\
        & \quad + h_4 - (\un - 1)\binaryEntropy(\p) - 1 \\
        & = \log \n + \tfrac{1}{2}\log \p(1 - \p) + h_4,
    \end{align*}
    and the result follows.
\end{proof}}
\short{\newpage}
\begin{corollary}
    For $\p = 1/2$, $\leakage = 0$.    
\end{corollary}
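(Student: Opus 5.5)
The plan is to substitute $\p = 1/2$ into the closed form of Theorem~\ref{thm:scenario1infoleak} and check that every term cancels. When $\p = 1/2$ the coefficient $1 - 2\p$ vanishes, so $\kterm = \n\p = \n/2$ no longer depends on $k$. This is the main simplification and it drives the rest of the calculation.

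Because $\kterm$ is constant, $\kterm\log\kterm = \frac{\n}{2}\log\frac{\n}{2}$ comes out of both sums. What remains are two normalized weight sums:
\begin{itemize}
\item the inner sum $\sum_{k}\binom{\n}{k}\binom{\un-\n}{\q-k}/\binom{\un}{\q}$, which equals $1$ by Vandermonde's identity, as already used in the proof of Theorem~\ref{thm:scenario1infoleak};
\item the outer sum $\sum_{\q}\binom{\un}{\q}\p^{\q}(1-\p)^{\un-\q}$, which is a binomial pmf and so also equals $1$.
\end{itemize}

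With these, the last term becomes
\[
-\frac{1}{2\n\cdot\frac14}\cdot\frac{\n}{2}\log\frac{\n}{2} = -\log\frac{\n}{2} = -\log\n + 1 .
\]
The middle term is $\frac12\log\frac14 = -1$. Adding up,
\[
\leakage = \log\n - 1 - \log\n + 1 = 0 .
\]

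As a sanity check I would also argue directly. With $\p = 1/2$, the query $\queryPool$ is a uniformly random subset of $\universalSet$ whatever $\f$ is. Indeed, $\randomPool$ is uniform on the coordinates $\universalSet\setminus\{\barcodeOfInterest\}$, and membership of $\barcodeOfInterest$ is decided by a fair coin (which server receives $\randomPoolStar$). So $\queryRV$ is independent of $\indexRV$, matching the classical argument.

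I do not expect a real obstacle. The only care needed is that the outer sum runs over $\q$ from $0$ to $\un$, so that it is exactly a binomial pmf summing to $1$, and that the hypergeometric normalization holds for every $\q$.
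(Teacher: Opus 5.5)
Your proposal is correct and matches the paper's proof. The paper also obtains the corollary by substituting $\p = 1/2$ into Theorem~\ref{thm:scenario1infoleak}. Its alternative argument shows $\entropy(\queryRV \mid \indexRV) = \un$, which forces $\entropy(\queryRV) = \un$; this is the entropy form of your direct observation that $\queryPool$ is uniform over all subsets of $\universalSet$ whatever $\f$ is.
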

\full{\begin{proof}
    Follows by setting $\p = 1/2$ in Theorem~\ref{thm:scenario1infoleak}. Alternatively, let us first analyze $\entropy(\queryRV \mid \indexRV = \f)$. For any subset $\randomPool \subseteq \universalSet \setminus \{ \f\}$,
    \begin{align*}
        \Pr[\randomPoolRV = \randomPool \mid \indexRV]
        & = \Pr[\size{\randomPool} = \r \mid \indexRV] \cdot \Pr[\randomPoolRV = \randomPool \mid \indexRV, \,\size{\randomPool} = \r] \\
        & = \frac{\binom{\un - 1}{\r}}{2^{\un - 1}} \cdot \frac{1}{\binom{\un - 1}{\r}} = \frac{1}{2^{\un-1}}.
    \end{align*}
    Thus, $\entropy(\randomPoolRV \mid \indexRV) = \un - 1$. The proof follows by $\entropy(\queryRV \mid \indexRV) = \entropy(\randomPoolRV \mid \indexRV) + 1$ as $\un \geq \entropy(\queryRV) \geq \entropy(\queryRV \mid \indexRV) = \un$.
\end{proof}}
\short{\vspace{-3ex}}
\begin{figure}[h]
    \centering
    \includegraphics[width=\linewidth]{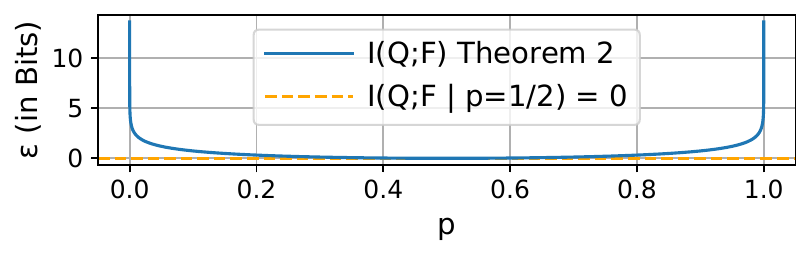}
    \short{\vspace{-5ex}}
    \caption{Leakage for fixed $\p$. $\un \approx 1.1 \cdot 10^{12}, \n \approx 2.7 \cdot 10^{9}$. $\entropy(\indexRV) = 31.36$.}
    \label{fig:informationleakagefixedp}
    \short{\vspace{-2.5ex}}
\end{figure}
\subsection{Scenario 2}
\label{sec:scenario2}
In this scenario, the user is unable to check whether $\barcodeOfInterest \in \randomPool$, and the copy $\randomPoolCopy$ of $\randomPool$ is perfect. This effectively means that $\barcodeOfInterest$ appears in $\randomPool$ with probability $\p$. This would mean $\randomPool = \randomPoolStar \rightarrow \queryPool_1 = \queryPool_2 \rightarrow \ans_1 = \ans_2$, and the retrieval fails. To ensure a low failure probability, we naively repeat the protocol $\m/2$ times over $\m/2$ server pairs. We again start by analyzing the easier case of fixed random pool size $\r$, and continue with the analysis of our actual protocol.
\begin{theorem}
    Let $\r \triangleq \size{\randomPool}$ be a constant s.t. $0 \le \r \le \un$. For a given $\lambda>0$, the failure probability of our protocol naively repeated over $\m \ge 2\lambda / (2\lbc - \log \r)$ servers is at most $2^{-\lambda}$. For any $j \in [2]$, the leakage is $\leakage = \mutualInfo{(\queryRV_{j};\indexRV)} = ((\un -\r)/(2\un)) \log \n.$
    \label{thm:scenario2constantr}
\end{theorem}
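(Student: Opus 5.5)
The plan splits the statement into the failure probability and the leakage, and handles them separately.

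\textbf{Failure probability.} In Scenario~2 the loop of $\algoName{Query}$ effectively runs over all of $\universalSet$. So with $\r$ fixed, $\randomPool$ is a uniformly random $\r$-subset of $\universalSet$. A single instantiation fails exactly when $\barcodeOfInterest\in\randomPool$, because then $\randomPoolStar=\randomPool$ and $\ans_1=\ans_2$. This event has probability $\r/\un$. Otherwise the copy is perfect, $\randomPoolStar\setminus\randomPool=\{\barcodeOfInterest\}$, and correctness follows as in Scenario~1.

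The $\m/2$ instantiations over disjoint server pairs use independently sampled pools. The overall failure probability is therefore $(\r/\un)^{\m/2}$. Since $\un=4^{\lbc}$, we have $\log\un=2\lbc$, so $(\r/\un)^{\m/2}\le 2^{-\lambda}$ is equivalent to $\frac{\m}{2}(2\lbc-\log\r)\ge\lambda$. This is exactly the stated bound on $\m$.

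\textbf{Leakage.} I would again write $\leakage=\entropy(\queryRV)-\entropy(\queryRV\mid\indexRV)$ for one server, as in \cref{eq:leakage}. To organise the computation, introduce the indicator $E$ of the event that the server received the augmented pool and $\barcodeOfInterest\notin\randomPool$. Then
\[
\Pr[E]=\tfrac12\bigl(1-\tfrac{\r}{\un}\bigr)=\tfrac{\un-\r}{2\un}.
\]
On $E$, the query $\queryPool=\randomPoolStar$ has size $\r+1$, so the server can distinguish this case by counting $\size{\queryPool}$. On $E^c$, the query is a size-$\r$ pool: either $\randomPool$ itself, or $\randomPoolStar=\randomPool$ with $\barcodeOfInterest\in\randomPool$.

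The decomposition I aim for is $\leakage=\Pr[E^c]\cdot 0+\Pr[E]\cdot\log\n$. This would follow from the chain rule
\[
\mutualInfo(\queryRV;\indexRV)=\mutualInfo(\size{\queryRV};\indexRV)+\mutualInfo(\queryRV;\indexRV\mid\size{\queryRV}),
\]
where $\size{\queryRV}$ determines $E$. The first term vanishes because $\Pr[E]$ does not depend on $\f$. The conditional probabilities $\Pr[\queryRV=\queryPool\mid\indexRV=\f]$ would be computed by the same counting as in the proof of Theorem~\ref{thm:LeakageConstantr}: uniform over $\r$-subsets, and, given $E$, uniform over $(\r+1)$-subsets containing $\barcodeOfInterest$.

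\textbf{Main obstacle.} The hard step is justifying the two per-branch values, namely zero leakage on the size-$\r$ branch and $\log\n$ on the size-$(\r+1)$ branch. On the size-$\r$ branch, the part coming from $\randomPoolStar$ with $\barcodeOfInterest\in\randomPool$ is a uniform $\r$-set conditioned to contain $\barcodeOfInterest$, which is not obviously independent of $\f$. On the size-$(\r+1)$ branch, a naive posterior is uniform only over $\barcodePool\cap\queryPool$, not all of $\barcodePool$. I would first carry out the exact hypergeometric computation mirroring \cref{eq:hq} for both branches, and check whether it collapses to $((\un-\r)/(2\un))\log\n$. If it does not collapse, I would fall back to reading the stated value as the contribution of the distinguishable event $E$ under the paper's accounting (the server treats the augmented query as identifying the target record). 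I would then make precise the assumption under which that accounting holds.
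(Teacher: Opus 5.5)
Your failure-probability argument is the same as the paper's, and your chain rule with $\mutualInfo(\size{\queryRV_j};\indexRV)=0$ is correct. For the leakage, though, the obstacle you flag cannot be overcome. Whenever $0<\r<\un$ and $\n\ge 2$, both per-branch values are false for the pool the server actually receives.

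\textbf{Size-$\r$ branch.} Conditioned on $\size{\queryPool_j}=\r$, one gets $\Pr[\queryRV_j=\mathcal{T}\mid\indexRV=\f]=\bigl(1+\mathbbm{1}[\barcode{\f}\in\mathcal{T}]\bigr)/\bigl(2\binom{\un}{\r}\bigr)$. This depends on $\f$, because with probability $\r/\un$ the augmented pool equals $\randomPool$, which then contains $\barcodeOfInterest$. So this branch leaks a strictly positive amount.

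\textbf{Size-$(\r+1)$ branch.} Conditioned on $\size{\queryPool_j}=\r+1$, the posterior of $\indexRV$ is uniform on $\barcodePool\cap\queryPool_j$. The conditional leakage is therefore $\log\n-\mathbb{E}\log(1+K)$, with $K$ hypergeometric with the weights of Theorem~\ref{thm:LeakageConstantr}, which is strictly less than $\log\n$.

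These effects do not cancel. Take $\un=64$ ($\lbc=3$), $\n=2$, $\r=1$. A direct count gives $\entropy(\queryRV_j)\approx 7.4707$ and $\entropy(\queryRV_j\mid\indexRV)=6.984375$, so $\mutualInfo(\queryRV_j;\indexRV)\approx 0.486$. The formula gives $\frac{\un-\r}{2\un}\log\n=63/128\approx 0.492$. So your exact hypergeometric computation will not collapse. Read literally for $\queryRV_j$, the stated equality does not hold, and no computation can close this gap.

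\textbf{What the paper does.} The paper reaches the formula through a different accounting:
\begin{enumerate*}[label=(\roman*)]
\item it writes $\queryRV$ as the pair $(\randomPoolRV,\choiceRV)$, where $\choiceRV$ is ``nothing appended'' with probability $\frac{\un+\r}{2\un}$, or ``append $\barcode{i}$'' with probability $\frac{\un-\r}{2\un\n}$ for each $i$;
\item it asserts that $\choiceRV$ is independent of $\randomPoolRV$, so that $\entropy(\queryRV)=\entropy(\randomPoolRV)+\entropy(\choiceRV)$, and likewise given $\indexRV$;
\item it concludes $\mutualInfo(\queryRV;\indexRV)=\entropy(\choiceRV)-\entropy(\choiceRV\mid\indexRV)=\frac{\un-\r}{2\un}\log\n$.
\end{enumerate*}
That quantity is exactly your fallback. $\choiceRV$ is constant off $E$ and equals $\barcode{\indexRV}$ on $E$, and $E$ is independent of $\indexRV$, so $\mutualInfo(\choiceRV;\indexRV)=\Pr[E]\log\n$.

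The step that transfers this to $\queryRV$ is the independence claim, and it fails for the reasons you anticipated. Given $\indexRV=\f$, the outcome $\choiceRV=\barcode{\f}$ forces $\barcode{\f}\notin\randomPool$. Also, a pool of size $\r+1$ does not reveal which element was appended, so $\queryRV$ does not determine $(\randomPoolRV,\choiceRV)$.

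So your fallback reproduces the paper's proof. The assumption you must state explicitly is that the quantity computed is $\mutualInfo(\choiceRV;\indexRV)$: the leakage to an observer who sees only which barcode, if any, was appended. It is not $\mutualInfo(\queryRV_j;\indexRV)$ for the pool the server receives.
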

\full{\begin{proof}
    The user will succeed in retrieving the queried record if at least one of the $(\randomPool, \randomPoolStar)$ pairs are different. This is due to the fact that $\randomPool \neq \randomPoolStar$ guarantees $\randomPoolStar \setminus \randomPool = \barcodeOfInterest$ since $\randomPool = \randomPoolCopy$. So,
    \begin{align*}
        \Pr[\text{failure}] = \Pr[\barcodeOfInterest \in \randomPool]^{\m / 2}.
    \end{align*}
    Thus, we just need to find the probability of $\barcodeOfInterest$ being in $\randomPool$. Because the user samples $\randomPool$ of size $\size{\randomPool} = \r$ from a universal pool of size $\size{\universalSet} = \un$, $\Pr[\barcodeOfInterest \in \randomPool_j] = \r/\un$. For a negligible failure probability, we write
    \begin{align*}
        2^{-\lambda} 
        & \geq \Pr[\barcodeOfInterest \in \randomPool_j]^{\m / 2} \\
        & = \left(\frac{\r}{\un}\right)^{\m / 2}
    \end{align*}
    and obtain $\m \ge 2\lambda / (2\lbc - \log \r)$ since $\un = 4^{\lbc}$. 
    
    For the leakage, we start by modeling a random variable $\randomPoolRV$ for the random pool $\randomPool$, and a random variable $\choiceRV$ that models the choice between $\randomPool$ and $\randomPoolStar$ for $\queryPool$. This way, we can model $\queryRV$ as the joint distribution of $\randomPoolRV$ and $\choiceRV$. Because $\choiceRV$ is independent of $\randomPoolRV$, we conclude that $\entropy(\queryRV) = \entropy(\randomPoolRV) + \entropy(\choiceRV)$. Then,
    \begin{align*}
        \mutualInfo(\queryRV; \indexRV) 
        & = \entropy(\queryRV) - \entropy(\queryRV \mid \indexRV) \\
        & = \entropy(\randomPoolRV) + \entropy(\choiceRV) - \entropy(\randomPoolRV \mid \indexRV) - \entropy(\choiceRV \mid \indexRV) \\
        & = \entropy(\choiceRV) - \entropy(\choiceRV \mid \indexRV)
    \end{align*}
    where the last equality is due to the fact that $\randomPool$ is sampled independently of $\indexRV$.
    
    The RV $\choiceRV$ chooses $\randomPool$ with probability $(\r/\un) + (1/2) \cdot (1 - \r/\un) = (\un + r) / (2\un)$. For any $i \in \set{\n}$, $\choiceRV$ chooses $\randomPoolStar = \randomPool \cup \{\barcode{i}\} \neq \randomPool$ with probability $(1/\n) \cdot (1/2) \cdot (1 - \r/\un) = (\un - \r) / (2\un \n)$. 
    Thus,
    \[
        \entropy(\choiceRV) = \frac{\un + r}{2\un} \log \frac{2\un}{\un + \r} + \n \frac{\un - \r}{2\un\n} \log \frac{2\un\n}{\un - \r}
    \]
    Similarly, notice that given $\indexRV$, the RV $\choiceRV$ chooses $\randomPool$ with probability $(\r/\un) + (1/2) \cdot (1 - \r/\un)$, and $\randomPoolStar = \randomPool \cup \barcode{\f}$ with probability $(1/2) \cdot (1 - \r/\un)$. Hence,
    \[
    \entropy(\choiceRV \mid \indexRV) = \frac{\un + r}{2\un} \log \frac{2\un}{\un + \r} + \frac{\un - \r}{2\un} \log \frac{2\un}{\un - \r}.
    \]
    and the result follows.
\end{proof}}
\begin{theorem}
    Let $\randomPool \subset \universalSet$ be a random pool of sequences and for any sequence $\bfr \in \universalSet$, let $\Pr[\bfr \in \randomPool] = \p$. For a given $\lambda>0$, the failure probability of our protocol naively repeated over $\m \ge 2\lambda / \log(1/ \p)$ servers is at most $2^{-\lambda}$. For any $j \in [2]$, the leakage is $\leakage = \mutualInfo{(\queryRV_{j};\indexRV)} = ((1 - \p)/2) \log \n.$
\end{theorem}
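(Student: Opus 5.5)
The plan mirrors the proof of Theorem~\ref{thm:scenario2constantr}, with the fixed-size sampling replaced by independent Bernoulli($\p$) inclusion of every $\bfr \in \universalSet$, now including $\barcodeOfInterest$. The argument has a failure part and a leakage part.

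\textbf{Failure probability.} Since the copy is perfect, $\randomPoolCopy = \randomPool$. Hence a single instantiation fails exactly when $\barcodeOfInterest \in \randomPool$, in which case $\randomPoolStar = \randomPool$. Otherwise $\randomPoolStar \setminus \randomPool = \{\barcodeOfInterest\}$ and $\algoName{Recovery2}$ returns $\rec{\f}$. Inclusion of $\barcodeOfInterest$ is a Bernoulli($\p$) event, and the $\m/2$ instantiations are independent, so
\[
\Pr[\text{failure}] = \p^{\m/2}.
\]
Requiring $\p^{\m/2} \le 2^{-\lambda}$ and taking logarithms gives $\m \ge 2\lambda/\log(1/\p)$.

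\textbf{Leakage.} As before, write $\queryRV$ as the pair $(\randomPoolRV, \choiceRV)$. Here $\randomPoolRV$ is the random pool, whose law (i.i.d.\ Bernoulli($\p$) on all of $\universalSet$) is now independent of $\indexRV$. The variable $\choiceRV$ describes the modification applied: either the server receives $\randomPool$ itself, or it receives $\randomPool \cup \{\barcode{i}\}$ with $\barcode{i} \notin \randomPool$.

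The main step, and the main obstacle, is justifying the decomposition $\mutualInfo(\queryRV;\indexRV) = \entropy(\choiceRV) - \entropy(\choiceRV \mid \indexRV)$. The pair $(\randomPoolRV, \choiceRV)$ is not literally a function of $\queryRV$ alone: the query $\randomPool \cup \{\barcode{i}\}$ could also arise as a plain pool. I would follow the paper's modelling convention from Theorem~\ref{thm:scenario2constantr}. In that convention, $\choiceRV$ records whether the extra element was genuinely appended, and since $\barcodeOfInterest \in \randomPool$ with probability $\p$ independently of the remaining coordinates, $\choiceRV$ is independent of $\randomPoolRV$. Consequently $\entropy(\queryRV) = \entropy(\randomPoolRV) + \entropy(\choiceRV)$, and the same identity holds conditionally on $\indexRV$. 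The $\randomPoolRV$ terms then cancel because $\randomPoolRV$ is independent of $\indexRV$.

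It remains to compute the distribution of $\choiceRV$.

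\textbf{Unconditional law of $\choiceRV$.} The server sees the unmodified pool either when the coin selects $\randomPool$ or when the modification is void, i.e.\ $\barcodeOfInterest \in \randomPool$. This has probability
\[
\p + \tfrac{1}{2}(1-\p) = \tfrac{1+\p}{2}.
\]
Each $i \in \set{\n}$ is the appended element with probability $\tfrac{1-\p}{2\n}$.

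\textbf{Conditional law given $\indexRV$.} Given $\indexRV$, the appended element is determined, so $\choiceRV$ takes just two values, with probabilities $\tfrac{1+\p}{2}$ and $\tfrac{1-\p}{2}$.

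\textbf{Conclusion.} Subtracting the two entropies, the $\tfrac{1+\p}{2}\log\tfrac{2}{1+\p}$ terms cancel. What remains is
\[
\n\cdot\tfrac{1-\p}{2\n}\log\tfrac{2\n}{1-\p} - \tfrac{1-\p}{2}\log\tfrac{2}{1-\p} = \tfrac{1-\p}{2}\log\n,
\]
which is the claimed leakage. This matches Theorem~\ref{thm:scenario2constantr} with $\r/\un$ replaced by $\p$.
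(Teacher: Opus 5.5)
\textbf{Failure probability.} This part is correct and coincides with the paper's argument.

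\textbf{Leakage: the step you flagged is not closed.} For the leakage you reproduce the paper's proof step for step: the same $(\randomPoolRV,\choiceRV)$ decomposition and the same two laws of $\choiceRV$. The step you flag as the main obstacle is not closed by your argument, and the paper's proof (here and in Theorem~\ref{thm:scenario2constantr}) simply asserts it.

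Take $\choiceRV\in\{\mathrm{plain}\}\cup\set{\n}$ with the law you use. Then $\Pr[\choiceRV=i\mid\randomPoolRV=\randomPool]$ equals $1/(2\n)$ if $\barcode{i}\notin\randomPool$ and $0$ otherwise. So $\choiceRV$ is \emph{not} independent of $\randomPoolRV$: it depends on whether $\barcodeOfInterest\in\randomPool$. That this membership is independent of the other coordinates does not help, because it is itself a coordinate of $\randomPoolRV$.

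If you instead let $\choiceRV$ be the coin together with $\f$, which is independent of $\randomPoolRV$, then $\Pr[\choiceRV=\mathrm{plain}]=1/2$ and the same computation yields $\frac12\log\n$. So no single $\choiceRV$ has both properties you use. Moreover, as you observed, $\queryRV$ does not determine $(\randomPoolRV,\choiceRV)$, so $\entropy(\queryRV)=\entropy(\randomPoolRV)+\entropy(\choiceRV)$ fails in any case. What the calculation actually evaluates is $\mutualInfo(\choiceRV;\indexRV)$: the leakage to a server that is told which strand, if any, was genuinely appended.

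\textbf{The true leakage differs.} This is not just a presentational gap; $\mutualInfo(\queryRV;\indexRV)$ really is different. Given $\indexRV=\f$, the pool sent to a server has independent membership indicators. Every sequence is present with probability $\p$, except $\barcodeOfInterest$, which is present with probability $\frac{1+\p}{2}$.

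$\mutualInfo(\queryRV;\indexRV)$ is at most the average divergence of $P_{\queryRV\mid\indexRV}$ from any fixed reference law. Taking the i.i.d.\ Bernoulli$(\p)$ law as reference gives, for every $\n$,
$\mutualInfo(\queryRV;\indexRV)\le\frac{1+\p}{2}\log\frac{1+\p}{2\p}-\frac{1-\p}{2}$.
This stays bounded, while $\frac{1-\p}{2}\log\n$ grows. For $\n=2$, $\p=1/2$ the exact value is $\frac38\log3-\frac12\approx0.094$, not $\frac14$.

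So the stated equality cannot be reached along this route, and it does not hold for the query as defined. A correct leakage statement needs either a direct computation of $\entropy(\queryRV)$, in the style of the proof of Theorem~\ref{thm:scenario1infoleak}, or an explicit threat model in which the appended strand is distinguishable.
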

\full{\begin{proof}
    The user will succeed in retrieving the queried record if at least one of the $(\randomPool, \randomPoolStar)$ pairs are different. This is due to the fact that $\randomPool \neq \randomPoolStar$ guarantees $\randomPoolStar \setminus \randomPool = \barcodeOfInterest$ since $\randomPool = \randomPoolCopy$. So,
    \begin{align*}
        \Pr[\text{failure}] = \Pr[\barcodeOfInterest \in \randomPool]^{\m / 2} = \p^{\m/2}.
    \end{align*}
    For a negligible failure probability, we write $2^{-\lambda} \geq \p ^{\m/2}$
    and obtain $\m \ge 2\lambda / \log (1/\p)$. 
    
    For the leakage, we model the random variables $\randomPoolRV$ and $\choiceRV$ as in the proof of Theorem~\ref{thm:scenario2constantr} and compute the leakage the same way. Given $\indexRV$, $\choiceRV$ chooses $\randomPool$ with probability $(1 + \p) /2$. Thus
    \[
    \entropy(\choiceRV \mid \indexRV) = \frac{1 + \p}{2} \log \frac{2}{1 + \p} + \frac{1 - \p}{2} \log \frac{2}{1 - \p}.
    \]
    If the queried index is not given, $\choiceRV$ chooses $\randomPool$ with probability $(1 + \p)/2$, and $\randomPoolStar = \randomPool \cup \{\barcode{i}\} \neq \randomPool$ with probability $(1 - \p) / (2\n) \enspace \forall i \in \set{\n}.$
    Hence
    \[
    \entropy(\choiceRV) = \frac{1 + \p}{2} \log \frac{2}{1 + \p} + n\frac{1 - \p}{2\n} \log \frac{2\n}{1 - \p},
    \]
    and the proof follows.
\end{proof}}
\subsection{Scenario 3}
\label{sec:scenario3}
In this scenario, the user is unable to check whether $\barcodeOfInterest \in \randomPool$, and the copy $\randomPoolCopy$ of $\randomPool$ is not perfect. Just as in Scenario~2, the user must repeat the protocol over $\m / 2$ server pairs to bound the failure probability. For each answer pair, the user can simply verify whether they decode to the correct record by assessing $\hbarcode{\f} = \barcode{\f}$. If that is not the case, the user can discard the output and decode the answer pair of the next server pair.

\section{Discussion and Future Work}
In this work, we established a foundational framework for PIR in DNA-based data storage. Several interesting challenges remain, e.g., more realistic experimental noise is considered in our ongoing work as our current formulations largely assume noiseless biochemical operations. Specifically, we are investigating the effects of imperfect PCR, where a primer designed for barcode $\barcode{i}$ inadvertently amplifies unintended records $\rec{j}$ (with $i\ne j$) due to barcode/primer similarity. The impact of imperfect pool copying, briefly introduced in Scenario 3, is also a focus of current extensions. Moreover, as in digital PIR, configurations with more than two servers are of interest. 

This research direction aims to improve the functionality, flexibility, and privacy aspects of DNA-based data storage. 


\short{\IEEEtriggeratref{17}}
\full{\IEEEtriggeratref{17}}

\bibliographystyle{IEEEtran}
\bibliography{aux/bibliography.bib}

\end{document}

\ifCLASSINFOpdf
\else
\fi
